\def\geq{\geqslant}
\def\leq{\leqslant}
\def\IR{\mathbb{R}}
\newtheorem{theorem}{Theorem}
\newtheorem{lemma}[theorem]{Lemma}
\newtheorem{problem}{Problem}
\def\boxit#1{\vbox{\hrule\hbox{\vrule\kern3pt
  \vbox{\kern3pt#1\kern3pt}\kern3pt\vrule}\hrule}}
\def\Box{\boxit{\null}}
\newcommand{\qed}{~$\Box$\medbreak}
\newenvironment{proof}{\noindent{\bf Proof: }}{\qed}
\begin{document}

\bibliographystyle{plain}

\title{Computing the Gromov hyperbolicity of a discrete metric space}
\author{Herv\'e Fournier
\thanks{Univ Paris Diderot, Sorbonne Paris Cit\'e,
Institut de Math\'ematiques de Jussieu, UMR 7586 CNRS, F-75205 Paris, France.
Email: \texttt{fournier@math.univ-paris-diderot.fr}.}
\and
Anas Ismail
\thanks{King Abdullah University of Science and Technology (KAUST),
Geometric Modeling and Scientific Visualization Center,
Thuwal 23955-6900. Saudi Arabia.
Email: \texttt{anas.ismail@kaust.edu.sa}.}
\and
Antoine Vigneron
\thanks{King Abdullah University of Science and Technology (KAUST),
Geometric Modeling and Scientific Visualization Center,
Thuwal 23955-6900. Saudi Arabia.
Email: \texttt{antoine.vigneron@kaust.edu.sa}.}
}

\maketitle

\begin{abstract}
We give exact and approximation algorithms for computing the 
Gromov hyperbolicity of an $n$-point discrete metric space.
We observe that computing the Gromov hyperbolicity from a 
fixed base-point reduces to a (max,min) matrix product.
Hence, using the (max,min) matrix product algorithm
by Duan and Pettie, the fixed base-point hyperbolicity
 can be determined in $O(n^{2.69})$ time.
It follows that the Gromov hyperbolicity can be computed in
$O(n^{3.69})$ time, and a 2-approximation can be found
in $O(n^{2.69})$ time. We also give a $(2\log_2 n)$-approximation
algorithm that runs in $O(n^2)$ time, based on a tree-metric
embedding by Gromov. We also show that hyperbolicity at a fixed base-point
cannot be computed in $O(n^{2.05})$ time, unless there exists a faster 
algorithm for (max,min) matrix multiplication than 
currently known.
\end{abstract}

\smallskip
\noindent \textbf{Keywords.} Algorithms design and analysis; Approximation algorithms; 
Discrete metric space; Hyperbolic space; (max,min) matrix product. 


\section{Introduction}

Gromov introduced a notion of metric-space hyperbolicity~\cite{BonkS00,Gromov87} 
using a simple four point condition. (See Section~\ref{sec:Gromov}.)
This definition is very attractive from a computer scientist point of
view as the hyperbolicity of a finite metric space
can be easily computed by brute force, by simply
checking the four point condition at each quadruple of points. However,
this approach takes $\Theta(n^4)$ time for an $n$-point metric space,
which makes it impractical for some applications to 
networking~\cite{cohen:hal-00735481}.
Knowing the hyperbolicity is important, as the running time and space requirements of 
previous algorithms designed for Gromov hyperbolic spaces are often 
analyzed in terms of their Gromov 
hyperbolicity~\cite{Chepoi12,ChepoiE07,KrauthgamerL06}.
So far, it seems that no better algorithm than brute force
was known for computing the Gromov hyperbolicity~\cite{Chen12}.
In this note, we give faster exact and approximation algorithms based 
on previous work on (max-min) matrix products by Duan and 
Pettie~\cite{DuanP09}, and the tree-metric embedding by Gromov~\cite{Gromov87}.

The exponent of matrix multiplication $\mu$ is the infimum
of the real numbers $\omega > 0$ such
that two  $n\times n$ real matrices can be multiplied in
$O(n^\omega)$ time,  exact arithmetic operations being performed
in one step~\cite{vzGG03}. Currently,
$\mu$ is known to be less than 2.373~\cite{Williams12}.
In the following,
$\omega$ is a real number such that we can multiply
two $n\times n$ real matrices in $O(n^\omega)$ time.

Our algorithm for
computing the Gromov hyperbolicity runs in $O(n^{(5+\omega)/2})$ time, which
is $O(n^{3.69})$. (See Section~\ref{sec:exact}.) For a fixed base-point,
this improves to $O(n^{(3+\omega)/2})$, which also yields a 2-factor
approximation for the general case within the same time bound.
(See Section~\ref{sec:factor2}.)
We also give a quadratic-time $(2\log_2 n)$-approximation algorithm.
(See Section~\ref{sec:factorlog}.)
Finally, we show that hyperbolicity at a fixed base-point cannot be computed
in time $O(n^{3(\omega-1)/2})=O(n^{2.05})$, unless (max,min) matrix
product can be computed in time $O(n^\tau)$ for $\tau< (3+\omega)/2$.
(See Section~\ref{sec:LowerBound}.)
The currently best known algorithm runs in $O(n^{(3+\omega)/2})$ 
time~\cite{DuanP09}.

\subsection{Gromov hyperbolic spaces}\label{sec:Gromov}

An introduction to Gromov hyperbolic spaces can be found in the article by
Bonk and Schramm~\cite{BonkS00}, and in the book by Ghys and de la Harpe~\cite{Ghys90}. 
Here we briefly present some definitions and facts that will be needed in this note.

A metric space $(M,d)$ is said to be $\delta$-hyperbolic for some $\delta \geq 0$
if it obeys the so-called {\em four point condition}: 
For any $x,y,z,t \in M$, the largest two distance sums among $d(x,y)+d(z,t)$,
$d(x,z)+d(y,t)$, and $d(x,t)+d(y,z)$, differ by at most $2\delta$. 
The {\em Gromov hyperbolicity} $\delta^*$ of $(M,d)$ is the smallest 
$\delta^* \geq 0$ such that $(M,d)$ is $\delta^*$-hyperbolic.

For any $x,y,r \in M$, the {\em Gromov product}
of $x,y$ at $r$ is defined as
\[(x \vert y)_r=\frac{1}{2} \left(d(x,r)+d(r,y)-d(x,y)\right).
\]
The point $r$ is called the {\em base point}.
Gromov hyperbolicity can also be defined in terms of the Gromov product,
instead of the four point condition above. The two definitions are equivalent,
with the same values of $\delta$ and $\delta^*$.
So a metric space $(M,d)$ is $\delta$-hyperbolic if and only if, for any $x,y,z,r \in M$
\[(x \vert z)_r \geq \min\{(x \vert y)_r,(y \vert z)_r\}-\delta.
\]
The Gromov hyperbolicity $\delta^*$ is the smallest value of $\delta$ 
that satisfies the above property. In other words,
\begin{equation*}
\delta^*=\max_{x,y,z,r}\left\{ \min\{(x \vert y)_r,(y \vert z)_r\}-(x \vert z)_r \right\}.
\end{equation*}
The hyperbolicity $\delta_r$ at base point $r$ is defined as
\begin{equation}\label{eq:fixedbase}
\delta_r=\max_{x,y,z}\left\{ \min\{(x \vert y)_r,(y \vert z)_r\}-(x \vert z)_r \right\}.
\end{equation}
Hence, we have
\begin{equation}\label{eq:deltadelta}
\delta^*=\max_r \delta_r.
\end{equation}

\section{Algorithms}
In this section, we consider a discrete metric space $(M,d)$ with
$n$ elements, that we denote $x_1,\ldots,x_n$. Our goal is to compute
exactly, or approximately, its hyperbolicity $\delta^*$, or 
its hyperbolicity $\delta_r$ at a base point $r$.

\subsection{Exact algorithms}\label{sec:exact}
The (max,min)-product $A \otimes B$ of two real matrices $A,B$ is 
defined as follows:
\[(A \otimes B)_{ij}=\max_k \min \{A_{ik},B_{kj}\}.
\]

Duan and Pettie~\cite{DuanP09} gave an $O(n^{(3+\omega)/2})$-time
algorithm for 
computing the (max,min)-product of two $n\times n$ matrices.

Let $r$ be a fixed base-point. 
By Equation~\eqref{eq:fixedbase},
if $A$ is the matrix defined by $A_{ij}=(x_i \vert x_j)_r$ for any $i,j$,
then $\delta_r$ is simply the largest coefficient in
$(A \otimes A) - A$. So we can compute $\delta_r$ in 
$O(n^{(3+\omega)/2})$ time. Maximizing over all values of $r$,
we can compute the hyperbolicity $\delta^*$ in
$O(n^{(5+\omega)/2})$ time, by Equation~\eqref{eq:deltadelta}.

\subsection{Factor-2 approximation}\label{sec:factor2}

The hyperbolicity $\delta_r$ with respect to any base-point is
known to be a 2-approximation of the hyperbolicity $\delta^*$~\cite{BonkS00}.
More precisely, we have $\delta_r \leq \delta^* \leq 2 \delta_r$. So, using the 
algorithm of Section~\ref{sec:exact}, we can pick an arbitrary base-point $r$ and
compute $\delta_r$ in $O(n^{(3+\omega)/2})$ time,
which gives us a 2-approximation of $\delta^*$.

\subsection{Logarithmic factor  approximation}\label{sec:factorlog}

Gromov~\cite{Gromov87} 
(see also the article by Chepoi et al.~\cite[Theorem 1]{Chepoi12}
and the book by Ghys and de la Harpe~\cite[Chapter 2]{Ghys90}) showed 
that any $\delta$-hyperbolic metric space $(M,d)$ can be embedded into 
a weighted tree $T$ with an additive error $2\delta \log_2 n$, and
this tree  can be constructed in time $O(n^2)$.
In particular, if we denote by
$d_T$ the metric corresponding to such a tree $T$, then 
\begin{equation}\label{eq:approxtree}
d(a,b)-2 \delta^* \log_2 n \leq d_T(a,b) \leq d(a,b) \mbox{ for any } a,b \in M. 
\end{equation}
This construction can be performed without prior knowledge of $\delta^*$.

We compute $D=\max_{a,b \in M} d(a,b)-d_T(a,b)$ in time $O(n^2)$. We claim
that:
\begin{equation}\label{eq:lognapx}
\delta^* \leq D \leq 2\delta^* \log_2 n.
\end{equation}
So we obtain a $(2\log_2 n)$-approximation $D$ of $\delta^*$ in time $O(n^2)$.

We still need to prove the double inequality~\eqref{eq:lognapx}. It follows 
from Equation~\eqref{eq:approxtree} that 
$d(a,b)-d_T(a,b) \leq 2 \delta^* \log_2 n$ for any $a,b$, and thus 
$D \leq 2 \delta^* \log_2 n$. In the following, we prove the other inequality.

For any $x,y,z,t$, we
denote by $\delta(x,y,z,t)$  the difference between the two largest 
distance sums among $d(x,y)+d(z,t)$, $d(x,z)+d(y,t)$, and $d(x,t)+d(y,z)$.
Thus, if for instance $d(x,y)+d(z,t) \geq d(x,z)+d(y,t) \geq d(x,t)+d(y,z)$,
we have $\delta(x,y,z,t)=d(x,y)+d(z,t)-d(x,z)-d(y,t)$. 
We also need to introduce the difference $\delta_T(x,y,z,t)$  between the two largest  
sums among $d_T(x,y)+d_T(z,t)$, $d_T(x,z)+d_T(y,t)$, and $d_T(x,t)+d_T(y,z)$.

For any $a,b \in M$, we have $d(a,b)-D \leq d_T(a,b) \leq d(a,b)$, so 
$\delta(x,y,z,t)-\delta_T(x,y,z,t) \leq 2D$, because in the worst case, the
largest sum with respect to $d$ is the same as the largest sum with respect
to $d_T$, and the second largest sum with respect to $d_T$ is equal to
the second largest sum with respect to $d$ minus $2D$.
But by construction, $d_T$ is a tree metric~\cite{Chepoi12}, so 
$\delta_T(x,y,z,t)=0$ for any $x,y,z,t$. Therefore $\delta(x,y,z,t)\leq 2D$
for any $x,y,z,t$, which means that $\delta^* \leq D$. 

\section{Conditional lower bounds}\label{sec:LowerBound}

We show that computing hyperbolicity at a fixed base-point is intimately connected with (max,min)-product. From the previous section, any improvement on the complexity of (max,min)-product yields an improvement on our algorithm to compute hyperbolicity. We show that a partial converse holds: Any improvement on the complexity of computing hyperbolicity at a fixed base-point
below $n^{3(\omega-1)/2}$ would give an improved algorithm for computing the (max,min)-product.

We consider the following decision problem.

\begin{problem}[{\sc hyp}: Fixed-base hyperbolicity]\label{pb:hyp}
Given a metric on $n$ points, a point $r$ and $\alpha \geq 0$, decide if the hyperbolicity $\delta_r$ at base point $r$ is larger than $\alpha$.
\end{problem}

Note that we do not ask to check whether the input is indeed a metric (no subcubic
algorithm is known for this problem~\cite{WilliamsW10}).
A tree metric is a metric such that $\delta_r=0$ for some base point $r$ or,
equivalently, such that $\delta_r=0$ for any base point $r$, so the special 
case $\alpha=0$ can be solved in $O(n^2)$ time as a tree metric 
can be recognized in $O(n^2)$ time~\cite{Bandelt90}.
In this section, we show that we cannot get a quadratic time algorithm
(or even an $O(n^{2.05})$ time algorithm)
unless some progress
is made on the complexity of (max,min) matrix product.
Our main tool is a result from Vassilevska and Williams~\cite{WilliamsW10} 
stated below for the special case of $(\min, \max)$ structures. We first define the  tripartite negative triangle problem (also known as the $IJ$-bounded
triangle problem~\cite{WilliamsW10}).

\begin{problem}[{\sc tnt}: Tripartite negative triangle]\label{pb:tnt}
Given a tripartite graph $G=(I \cup J \cup K,E)$ with weights $w : E \rightarrow \IR$, a triangle $(i,j,k) \in I \times J \times K$ is called \emph{negative}
if $\max \{ w_{i,k} , w_{k,j}\} + w_{i,j} < 0$. The tripartite negative triangle
problem is to decide whether there exists a negative triangle in $G$.
\end{problem}

Note that the above definition is not symmetric in $I,J,K$: Only $I$ and $J$ are interchangeable.
Vassilevska and Williams proved the following reduction
(\cite[Theorem V.2]{WilliamsW10} for the special case where
$\odot = \max$).

\begin{theorem}\cite{WilliamsW10}\label{thm:reduction-WW}
Let $T(n)$ be a function such that $T(n)/n$ is nondecreasing. Suppose the tripartite negative triangle problem in an $n$-node graph can be solved in $T(n)$ time. Then the (min,max)-product of
two $n\times n$ matrices can be performed in
$O(n^2 T(n^{1/3}) \log W)$, where $W$ is the absolute value of the
largest integer in the output.
\end{theorem}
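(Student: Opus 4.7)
My plan is to binary-search each entry of the $(\min,\max)$-product $C_{ij} := \min_k \max\{A_{ik}, B_{kj}\}$ and, at each threshold, invoke the negative-triangle detection oracle on a family of small induced subgraphs. The first step is the \emph{reduction to negative triangle}: for a threshold $\alpha$, I build the tripartite graph $G_\alpha$ on vertex set $I \cup J \cup K$ of size $3n$ by setting $w_{ik} := A_{ik}$, $w_{kj} := B_{kj}$, and $w_{ij} := -\alpha$. By the definition of a negative triangle, $(i,j,k)$ is negative in $G_\alpha$ iff $\max\{A_{ik}, B_{kj}\} \leq \alpha$, iff $C_{ij} \leq \alpha$ is witnessed by $k$; so the thresholded question ``is $C_{ij} \leq \alpha$?'' becomes ``does $(i,j)$ participate in some negative triangle of $G_\alpha$?''.

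The second step is the \emph{block decomposition}. I partition each of $I, J, K$ into $n^{2/3}$ blocks of size $s := n^{1/3}$ and, for each of the $(n/s)^3 = n^2$ block triples $(I_a, J_b, K_c)$, invoke the oracle on the induced subgraph of $3s$ vertices. Using that $T(n)/n$ is nondecreasing, each call costs $T(O(s)) = O(T(n^{1/3}))$, for a per-threshold total of $O(n^2 T(n^{1/3}))$. This gives one bit per block triple, indicating whether that block triple carries any negative triangle. A block triple answering ``no'' eliminates every pair $(i,j) \in I_a \times J_b$ from being witnessed through $K_c$; what remains is to transform the ``yes'' answers into the per-pair indicator $[C_{ij} \leq \alpha]$.

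The third step is the \emph{binary search}: since integer entries of $C$ lie in $[-W, W]$, $O(\log W)$ bisection rounds at appropriate midpoints pin down every $C_{ij}$ exactly, yielding the target $O(n^2 T(n^{1/3}) \log W)$ running time. The main obstacle I anticipate is the aggregation from block-triple bits to per-pair bits alluded to above: a naive recursive subdivision inflates by more than the gains permitted by the monotonicity of $T(n)/n$, so one likely needs a more careful mechanism, for instance iterative peeling in which a newly discovered pair is disabled by setting $w_{ij} := +\infty$ before re-querying, interleaved with the block decomposition, whose amortized cost across the $O(\log W)$ binary-search rounds must be shown to remain $O(n^2 T(n^{1/3}) \log W)$.
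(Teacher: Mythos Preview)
The paper does not prove this theorem: it is quoted verbatim as a black box from Vassilevska Williams and Williams~\cite{WilliamsW10}, with no accompanying argument. So there is no ``paper's own proof'' to compare against.

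That said, your sketch is essentially the argument from~\cite{WilliamsW10}: reduce the thresholded question $C_{ij}\le\alpha$ to negative-triangle detection, block-decompose into $n^{1/3}$-sized pieces, and binary-search over values. You have correctly isolated the only nontrivial step, namely turning the per-block-triple detection bit into the per-pair indicator, and the ``peeling'' mechanism you propose (on a \emph{yes}, extract one negative triangle, record the pair $(i,j)$, set $w_{ij}:=+\infty$, and re-query the same block triple) is exactly what~\cite{WilliamsW10} does. The amortization is per binary-search round, not across rounds: in a single round each pair $(i,j)$ is peeled at most once globally, so the total number of \emph{yes} answers over all block triples is at most $n^2$, and together with the $n^2$ terminating \emph{no} answers this gives $O(n^2)$ oracle calls on graphs of size $O(n^{1/3})$, hence $O(n^2\,T(n^{1/3}))$ per round and $O(n^2\,T(n^{1/3})\log W)$ overall.

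Two small points you leave implicit. First, the binary search must be run in parallel over all pairs, so the edge weight on the $I$--$J$ side is $w_{ij}=-\alpha_{ij}$ with a per-pair midpoint, not a single global $\alpha$; your construction extends to this without change. Second, after a \emph{yes} you must actually \emph{find} a negative triangle in the block triple before you can peel; this is done either by brute force in $O(s^2)$ per find (fine once $T(m)\ge m^2$) or by a logarithmic-depth self-reduction using the oracle itself. Neither point is a genuine obstacle, but a complete write-up needs to address them.
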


We now show how Problem~\ref{pb:tnt} reduces to Problem~\ref{pb:hyp}.
\begin{lemma}\label{lem:reduc}
The tripartite negative triangle problem {\sc (tnt)} reduces to
fixed-base hyperbolicity {\sc (hyp)} in quadratic time.
\end{lemma}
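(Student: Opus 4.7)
The plan is to construct from the tripartite weighted graph $G=(I\cup J\cup K,E)$ of size $n$ an $(n+1)$-point metric on $M=\{r\}\cup I\cup J\cup K$ together with a threshold $\alpha$, so that $\delta_r\geq\alpha$ exactly when $G$ contains a negative triangle. The driving observation is that if every distance $d(a,r)$ is set to a common value $L$, then $(a|b)_r=L-d(a,b)/2$ for all $a,b\neq r$, so the hyperbolicity expression simplifies to
\begin{equation*}
\min\{(x|y)_r,(y|z)_r\}-(x|z)_r=\tfrac{1}{2}\bigl[d(x,z)-\max\{d(x,y),d(y,z)\}\bigr],
\end{equation*}
and $\delta_r\geq\alpha$ becomes the purely metric condition that some triple satisfies $d(x,z)-\max\{d(x,y),d(y,z)\}\geq 2\alpha$. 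This structurally mirrors the negative-triangle inequality, once the edge weights are encoded into pairwise distances inside $M\setminus\{r\}$.

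After rescaling we may assume every weight lies in $[-W,W]$. I would fix a threshold $\alpha>W$ (for concreteness $\alpha=2W$), put $C_1=3W+2\alpha$ and $C_2=C_1+2\alpha$, and define the metric by: $d(a,r)=L$ for every $a\neq r$, with $L$ chosen so that $2L$ is at least the diameter of $M\setminus\{r\}$; $d(i,k)=w_{ik}+C_1$ and $d(k,j)=w_{kj}+C_1$ on the $K$-edges; $d(i,j)=-w_{ij}+C_2$ on the $I$--$J$ edges; and $d(a,b)=D_{\mathrm{same}}:=C_1+\alpha$ whenever $a$ and $b$ lie in the same part. All non-base distances then lie in $[C_1-W,C_2+W]$, and the inequality $C_1\geq 3W+2\alpha$ ensures this interval has ratio at most two, so the triangle inequality holds automatically inside $M\setminus\{r\}$; the triangles through $r$ follow from the choice of $L$. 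The whole construction is carried out in $O(n^2)$ time.

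For a ``good'' triple $(i,k,j)\in I\times K\times J$ a direct substitution gives $d(x,z)-\max\{d(x,y),d(y,z)\}=2\alpha-\bigl[w_{ij}+\max\{w_{ik},w_{kj}\}\bigr]$, which meets the threshold $2\alpha$ precisely when $(i,k,j)$ is a negative triangle. The step I expect to be the main obstacle is ruling out false witnesses coming from all the other part-patterns: triples with all three vertices in one part, triples with exactly two vertices in one part (according as the repeated part is at the middle position $y$ or at an end), and triples using all three parts but in one of the four ``wrong'' orders such as $(I,J,K)$ or $(K,J,I)$. The decisive structural fact is that $D_{\mathrm{same}}=C_1+\alpha=C_2-\alpha$ sits strictly between the ``$K$-edge'' range $[C_1-W,C_1+W]$ and the ``$I$--$J$ edge'' range $[C_2-W,C_2+W]$, with a gap of size $\alpha-W>0$ on each side. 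A short case-by-case inspection then shows that in every bad pattern $d(x,z)-\max\{d(x,y),d(y,z)\}$ is strictly below $2\alpha$ (typically at most $2W$), so no bad triple can certify $\delta_r\geq\alpha$. Combining this case analysis with the good-triple identity yields the desired equivalence and completes the quadratic-time reduction.
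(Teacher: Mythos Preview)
Your argument is correct. The paper establishes the same lemma via a chain of three quadratic reductions, {\sc tnt} $\leq_2$ {\sc tpt} $\leq_2$ {\sc pt} $\leq_2$ {\sc hyp}: first negate the $K$-edge weights to turn negative triangles into positive ones; then add additive shifts ($0$ on $I\times J$, $2\lambda$ inside each part, $5\lambda$ on $K$-edges) so that any positive triangle in the resulting complete graph must respect the tripartition; finally observe that any symmetric weight matrix becomes a matrix of Gromov products at a fixed base once one adds $\lambda E_n+\lambda' I_n$. Your construction performs all three moves at once inside a single explicit metric: the sign flip in $d(i,j)=-w_{ij}+C_2$ is the first step, the three distance levels $C_1<D_{\mathrm{same}}=C_1+\alpha<C_2=C_1+2\alpha$ play the role of the paper's shifts, and setting $d(\cdot,r)\equiv L$ is precisely the $\lambda' I_n$ trick (it makes every Gromov product an affine function of the corresponding distance). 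The paper's modular route isolates a reusable standalone fact---the generic positive-triangle problem on a complete graph embeds into fixed-base hyperbolicity---while your direct route is self-contained and makes the bad-pattern case analysis explicit; indeed, once $\alpha>W$ every triple with $(x,z)\notin I\times J\cup J\times I$ or with $y\notin K$ scores at most $\alpha+W<2\alpha$, so the analysis is as short as you anticipated.
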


\begin{proof}
Besides {\sc tnt} and {\sc hyp}, we define two intermediate problems:

\begin{itemize}
\item Tripartite positive triangle ({\sc tpt}).
Given a tripartite graph $G=(I \cup J \cup K,E)$ with weights $w : E \rightarrow \IR$, decide if there exists a triangle 
$(i,j,k) \in I \times J \times K$ such that
$\min \{ w_{i,k} , w_{k,j}\} - w_{i,j} > 0$.
\item Positive triangle ({\sc pt}).
Given a complete undirected graph $G=(V,E)$ with weights $w : E \rightarrow \IR$ and $\alpha \geq 0$, decide if there are three distinct vertices $i,j,k \in V$ such that
$\min \{ w_{i,k} , w_{k,j}\} - w_{i,j} > \alpha$.
\end{itemize}

We will denote by $\leq_2$ the quadratic-time reductions. The lemma is obtained by the following sequence of reductions:
{\sc tnt} $\leq_2$ {\sc tpt} $\leq_2$ {\sc pt} $\leq_2$ {\sc hyp}.

Let us show that {\sc tnt} $\leq_2$ {\sc tpt}. Let $G=(I \cup J \cup K,E)$ and
$w : E \rightarrow \IR$ an instance of {\sc tnt}. Consider the same graph
with the following weights $w'$: $w'_{i,j} = w_{i,j}$ for $(i,j) \in I \times J$
and 
$w'_{\ell,k} = - w_{\ell,k}$ for $(\ell,k) \in (I \cup J) \times K$.
The graph $G$ with weights $w'$
has a tripartite positive triangle if there exists $(i,j,k) \in I \times J \times K$ such that
$\min \{ w'_{i,k} , w'_{k,j}\} - w'_{i,j} > 0$. This is equivalent
to $\max \{ w_{i,k} , w_{k,j}\} + w_{i,j} < 0$.

We now show that {\sc tpt} $\leq_2$ {\sc pt}.
Let $G=(I \cup J \cup K,E)$ and $w : E \rightarrow \IR$ an instance of
{\sc tpt}. Let $G'=(V',E')$ be the complete graph on the set of 
$3n$ nodes $V' = I \cup J \cup K$. We shall define a symmetric weight 
function $w':E' \rightarrow \IR$
and $\alpha\geq 0$  such that $G',w',\alpha$ is a positive instance of {\sc pt}
if and only if $G$ has a tripartite positive triangle. 
Let $\lambda = 1+\max_{e \in E} |w(e)|$.
For $(i,j) \in I \times J$, let $w'_{i,j} = w_{i,j}$. 
Let $w'_{\ell_1,\ell_2} = 2\lambda$
for $(\ell_1,\ell_2) \in (I \times I) \cup (J \times J) \cup (K \times K)$.
For $(\ell,k) \in (I \cup J) \times K$, let $w'_{\ell,k} = w_{\ell,k}+5\lambda$. 
Finally, let $\alpha = 5 \lambda$. Assume that $G$ contains a tripartite positive
triangle $(i,j,k)$, and thus $\min \{ w_{i,k}, w_{k,j}\} - w_{i,j} > 0$.
It means that $\min \{ w_{i,k} + 5 \lambda, w_{k,j} + 5 \lambda\} - w_{i,j} > 5 \lambda$.
So $\min \{ w'_{i,k}, w'_{k,j}\} - w'_{i,j} > \alpha$, and thus $G'$ has a
positive triangle. 

Conversely, assume $G'$ has a positive triangle $(i,j,k)$:
\begin{equation}\label{eq:inequality}
\min \{ w'_{i,k} , w'_{k,j}\} - w'_{i,j} > 5 \lambda.
\end{equation}
Let us determine the location of the triangle $(i,j,k)$.
\begin{itemize}
\item Inequality~\eqref{eq:inequality}
implies that $6 \lambda > \min \{ w'_{i,k} , w'_{k,j}\}
> 5 \lambda + w'_{i,j}$. Hence, $\lambda > w'_{i,j}$.
It follows that $(i,j) \in (I \times J) \cup (J \times I)$.
\item Inequality~\eqref{eq:inequality} also implies that
$\min \{ w'_{i,k} , w'_{k,j}\} > 5 \lambda + w'_{i,j} >  4 \lambda$.
Hence, it must be that both $(i,k)$ and $(j,k)$ belong to
$(I \times K) \cup (J \times K)$.
\end{itemize}
From the two conditions above, it follows that
$k \in K$ and $(i,j) \in (I \times J) \cup (J\times I)$.
Without loss of generality, let us assume
that $(i,j) \in I \times J$. Then  
$\min \{ w_{i,k} + 5\lambda , w_{j,k} + 5\lambda \} - w_{i,j} > 5 \lambda$
which means that $(i,j,k)$ is a tripartite positive triangle in $G$.

Let us show that {\sc pt} $\leq_2$ {\sc hyp}.
Consider a weighted complete graph $G$ on the vertices $\{1,\ldots,n\}$
and $\alpha \geq 0$ an instance of {\sc pt}. Let
$W$ be the (symmetric) weight matrix of $G$, so for any $i,j$ the coefficient 
$W_{ij}$ is the weight $w_{i,j}$ of the edge $(i,j)$, and $W_{ii}=0$. 
Let $\lambda'=|\max_{ij} w_{i,j}|$ be the absolute value of the largest weight.
Let $I_n$ be the $n \times n$
identity matrix and $E_n$ be the all-one matrix of size $n\times n$.
Let $P$ be the matrix defined by $P=W+ 2\lambda' E_n+ 4\lambda' I_n$. 
We will show that $(M,d)$ is a metric space, 
where $M=\{r,x_1,x_2,\ldots,x_n\}$, and the metric $d$ is defined as follows.
\[
\begin{array}{rlll}
d(x_i,x_j)&=& P_{ii} + P_{jj} -2 P_{ij} \quad & \text{for any $x_i,x_j \in M \setminus \{r\}$},\\
d(x_i,r)&= &d(r,x_i) \ = \ P_{ii} & \text{for any $x_i \in M \setminus \{r\}$},\\
d(x,x)&= &0 & \text{for any $x \in M$}.\\
\end{array}
\]

We now prove that $(M,d)$ is indeed a metric space. The function $d$ is 
symmetric because the matrix $P$ is symmetric. For any $x_i,x_j \in M \setminus \{r\}$, $i \neq j$,
we have $d(x_i,x_j)=8\lambda' - 2W_{ij}$ and thus $6\lambda' \leq d(x_i,x_j) 
\leq 10 \lambda'$. We also have $d(x_i,r)=P_{ii}=6\lambda'$, so  
$6 \lambda' \leq d(x,y) \leq 10 \lambda'$ for any two distinct points $x,y \in M$,
 which implies that the triangle inequality is satisfied. 

We have just proved that $(M,d)$ is a metric space. In addition, the matrix $P$ records
the Gromov products of $(M,d)$ at base $r$, because for any 
$x_i,x_j \in M \setminus \{r\}$, we have:
\[2(x_i\mid x_j)_r	 = d(x_i,r)+d(x_j,r)-d(x_i,x_j)
		 = P_{ii}+P_{jj}-(P_{ii}+P_{jj}-2P_{ij})
		 = 2P_{ij}.
\]

We now argue that for any $a,b,c \in M$ such that 
$\min\{(a \mid c)_r,(c \mid b)_r\}-(a\mid b)_r > 0$, we must have
$(a,b,c)=(x_i,x_j,x_k)$ for distinct $i,j,k \in \{1,\ldots,n\}$.
For any $y \in M$, we have $(y\mid r)_r=0$, and thus $r \notin \{a,b,c\}$. 
So we must have $(a,b,c)=(x_i,x_j,x_k)$ for some  $i,j,k \in \{1,\ldots,n\}$. Thus,
we only need to argue that $a,b$ and $c$ are distinct. 
Indeed, if  $a=c$, then
\[
\min\{(a \mid c)_r,(c \mid b)_r\}-(a \mid b)_r  = 
\min\{(a \mid a)_r,(a \mid b)_r \} -(a \mid b)_r
 \leq 0\]
and if $a=b$,
\begin{align*}
2(\min\{(a \mid c)_r,(b \mid c)_r\}-(a\mid b)_r)  & =
2(a \mid c)_r-2(a\mid a)_r \\  
 & = 2(a \mid c)_r-2d(a,r) \\
& = d(c,r)-d(c,a)-d(a,r) \leq 0.
\end{align*}

Thus, the hyperbolicity $\delta_r$ at base $r$ of $(M,d)$ is larger than $\alpha$ if and
only if there are distinct $i,j,k \in \{1,\ldots,n\}$ such that 
$\min\{(x_i \mid x_k)_r,(x_k \mid x_j)_r\}-(x_i\mid x_j)_r > \alpha$. So we showed that 
$\delta_r > \alpha$  if and only if there are distinct $i,j,k$ such that 
$\min\{P_{ik},P_{kj}\}-P_{ij} > \alpha$, or equivalently, 
$\min\{W_{ik},W_{kj}\}-W_{ij} > \alpha$, which means that $G$ has a positive triangle.
\end{proof}

\begin{theorem}
If fixed-base hyperbolicity can be decided in time $O(n^\nu)$, with $\nu \geq 2$,
then the (max,min)-product of two matrices of size $n \times n$ can be
done in time $O(n^{2+\nu/3} \log n)$.
\end{theorem}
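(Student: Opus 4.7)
The plan is to chain Lemma~\ref{lem:reduc} with Theorem~\ref{thm:reduction-WW} and then close two small technical gaps: going from (min,max)-product to (max,min)-product, and replacing the $\log W$ factor by $\log n$.

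First, I would combine the two existing results. An $O(n^\nu)$ algorithm for fixed-base hyperbolicity, composed with the quadratic reduction of Lemma~\ref{lem:reduc}, yields a tripartite negative triangle algorithm running in time $O(n^\nu+n^2)=O(n^\nu)$, using $\nu\geq 2$; the constant-factor blow-up in the vertex count along the chain {\sc tnt}$\leq_2${\sc tpt}$\leq_2${\sc pt}$\leq_2${\sc hyp} is harmless. Taking $T(n)=Cn^\nu$ as an upper bound gives $T(n)/n=Cn^{\nu-1}$ nondecreasing (since $\nu\geq 1$), so Theorem~\ref{thm:reduction-WW} applies and produces a (min,max)-product algorithm on $n\times n$ matrices running in time
\[O\bigl(n^2\, T(n^{1/3})\,\log W\bigr)=O\bigl(n^{2+\nu/3}\log W\bigr).\]

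It remains to perform two routine adjustments. The identity $\max_k\min\{A_{ik},B_{kj}\}=-\min_k\max\{-A_{ik},-B_{kj}\}$ shows that negating the entries of both inputs and then of the output reduces (max,min)-product to (min,max)-product with only $O(n^2)$ overhead. To drop the $\log W$ factor in favour of $\log n$, I would exploit the fact that a (max,min)-product depends only on the relative order of the $2n^2$ input entries: sort them once in $O(n^2\log n)$ time and replace each entry by its rank in $\{1,\ldots,2n^2\}$; each output entry is then also one of those ranks, and the actual value is recovered by a table lookup. On the rank instance, $W\leq 2n^2$, hence $\log W=O(\log n)$, giving the claimed $O(n^{2+\nu/3}\log n)$ bound. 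The only nonroutine step is this last one, but it is a standard order-preserving trick and presents no real obstacle.
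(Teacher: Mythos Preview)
Your proposal is correct and follows essentially the same approach as the paper: invoke Lemma~\ref{lem:reduc} to obtain an $O(n^\nu)$ tripartite negative triangle algorithm, apply Theorem~\ref{thm:reduction-WW}, handle the (min,max)/(max,min) duality by negation, and eliminate the $\log W$ factor via the rank-replacement trick so that $W=O(n^2)$. Your write-up is in fact slightly more careful than the paper's (you explicitly verify that $T(n)/n$ is nondecreasing and note the constant blow-up in the reduction chain), but the argument is the same.
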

\begin{proof}
Let us assume fixed-base hyperbolicity can be decided in time $O(n^\nu)$.
By Lemma~\ref{lem:reduc},
the tripartite negative triangle problem can also
be solved in time $O(n^\nu)$.
It remains to show that (max,min)-product can be done in time 
$O(n^{2+\nu/3} \log n)$. By duality, the complexity of computing
(max,min) products and (min,max) products are the same.
Moreover, we can assume without lost of generality that the two matrices
of which we want to compute the product have integer inputs in
the range $\{0,\ldots,2n^2\}$.
(Indeed, one can sort the inputs of the two matrices, replace input values with
with their ranks, perform the product, and
replace back the ranks with the initial values in the product.)
Applying Theorem~\ref{thm:reduction-WW} gives complexity
$O(n^{2+\nu/3} \log n)$ to compute this type of products.
\end{proof}

The above theorem immediately implies that any $O(n^\nu)$-time
algorithm with $\nu < 3(\omega -1)/2$ 
to solve fixed-base hyperbolicity would give an $O(n^\tau)$-time algorithm 
with $\tau<(3+\omega)/2$
for the (max,min)-product of two matrices of size $n \times n$.
Since the state of the art is $3(\omega -1)/2 > 2.05$, any $O(n^{2.05})$-time 
algorithm for fixed-base
Gromov hyperbolicity would yield an improvement on the complexity of
$(\max,\min)$-product.


\end{document}